\begin{document}

\titlerunning{Monotone Grid Drawings of Planar Graphs}
\authorrunning{Hossain and Rahman}
\newtheorem{fact}{Fact}{\bf}{\it}

\mainmatter
\title{Monotone Grid Drawings of Planar Graphs}

\author{
Md.\ Iqbal\ Hossain and 
Md.\ Saidur Rahman
} 

\institute{
Graph Drawing and Information Visualization Laboratory,\\
Department of Computer Science and Engineering,\\
Bangladesh University of Engineering and Technology (BUET),\\
Dhaka-1000, Bangladesh.\\
\email{mdiqbalhossain@cse.buet.ac.bd, saidurrahman@cse.buet.ac.bd}
}

\maketitle

\begin{abstract} A monotone drawing of a planar graph $G$ is a planar straight-line  drawing of $G$ where a monotone path exists
 between every pair of vertices of $G$ in some direction. Recently monotone
	drawings of planar graphs have been proposed as a new standard for visualizing graphs. A monotone drawing of a planar graph is a monotone grid drawing if every vertex in the drawing is drawn on a grid point. In this paper we
	study monotone grid drawings of planar graphs in a
variable embedding setting. We show that every connected planar graph of $n$ vertices has a  
	 monotone grid drawing on a grid of size $O(n)\times O(n^2)$, and such a drawing can be found in $O(n)$ time.
\end{abstract}

\section{Introduction}

\label{Introduction} 
A {\it straight-line drawing} of a planar graph $G$ is a drawing of $G$ in which each vertex is drawn as a point and each edge   is drawn as a straight-line segment without any edge crossing.
A path $P$ in a straight-line drawing of a planar graph is {\it monotone} if there
exists a line $l$ such that the orthogonal projections of the vertices of $P$ on $l$
appear along $l$ in the order induced by $P$. A straight-line drawing $\Gamma$ of a  planar graph $G$
is a {\it monotone drawing} of $G$ if $\Gamma$  contains at least one monotone path between every pair of vertices. In the drawing of a graph in Fig.~\ref{fig:mononotmono}, the path between the vertices $s$ and $t$ drawn as a thick line is a monotone path with respect to the direction $d$, whereas no monotone path exists with respect to any direction between the vertices $s'$ and $t'$. We call a monotone drawing of a planar graph a {\it monotone grid drawing} if every vertex is drawn on a grid point.


\begin{figure}
\centering
\includegraphics[scale=0.3]{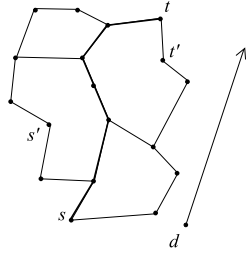}
\caption{The path between vertices $s$ and $t$ (as shown by thick line) is monotone with respect to direction $d$.}
\label{fig:mononotmono}
\end{figure}

 Monotone drawings of graphs are well motivated by human subject experiments by Huang {\it et al.}~\cite{Huangx}, who showed that the ``geodesic tendency" (paths following a given direction) is important in comprehending the underlying graph.
{\it Upward drawings}~\cite{DB88,MAHMSR07,BattistaETT99,NR04}  are related to monotone drawings where every directed path is monotone with respect to the vertical line, while in a monotone drawing each monotone path, in general, is monotone with respect to a different line.
Arkin {\it et al.}~\cite{Ar89} showed that any strictly convex drawing of a planar graph is monotone and they gave an $O(n\log n)$ time algorithm  for finding such a path between a pair of vertices in a strictly convex drawing of a planar graph of $n$ vertices. Angelini {\it et al.}~\cite{PAT12} showed that every biconnected planar graph of $n$ vertices has a
  monotone drawing in real coordinate space.
They also showed that every tree of $n$ vertices admits a  monotone grid drawing on a grid of $O(n)\times O(n^2)$ or  $O(n^{1.6})\times O(n^{1.6})$
area. 
It is known that every outerplane graph of $n$ vertices admits a monotone grid drawing on a grid of area $O(n)\times O(n^2)$~\cite{Ang11}. Recently, Hossain  and Rahman~\cite{IQ12} showed that every series-parallel graph of $n$ vertices admits a monotone grid drawing on an  $O(n)\times O(n^2)$ grid, and such a drawing can be found in  $O(n\log n)$ time. 
 It is also known that not every plane graph (with fixed embedding) admits a monotone
drawing~\cite{PAT12}.

In this paper we investigate whether every connected planar graph has a monotone
drawing and what are the area requirements for  such a drawing on a grid. We show that  every connected planar graph of $n$ vertices has a   monotone grid drawing on an $O(n)\times O(n^2)$ grid, and such a drawing can be computed in  $O(n)$  time.
As a byproduct, we introduce a spanning tree of a plane graph with some interesting properties. Such a spanning tree may find applications in other areas of graph algorithms as well.

 \begin{figure}
\centering
\includegraphics[scale=0.6]{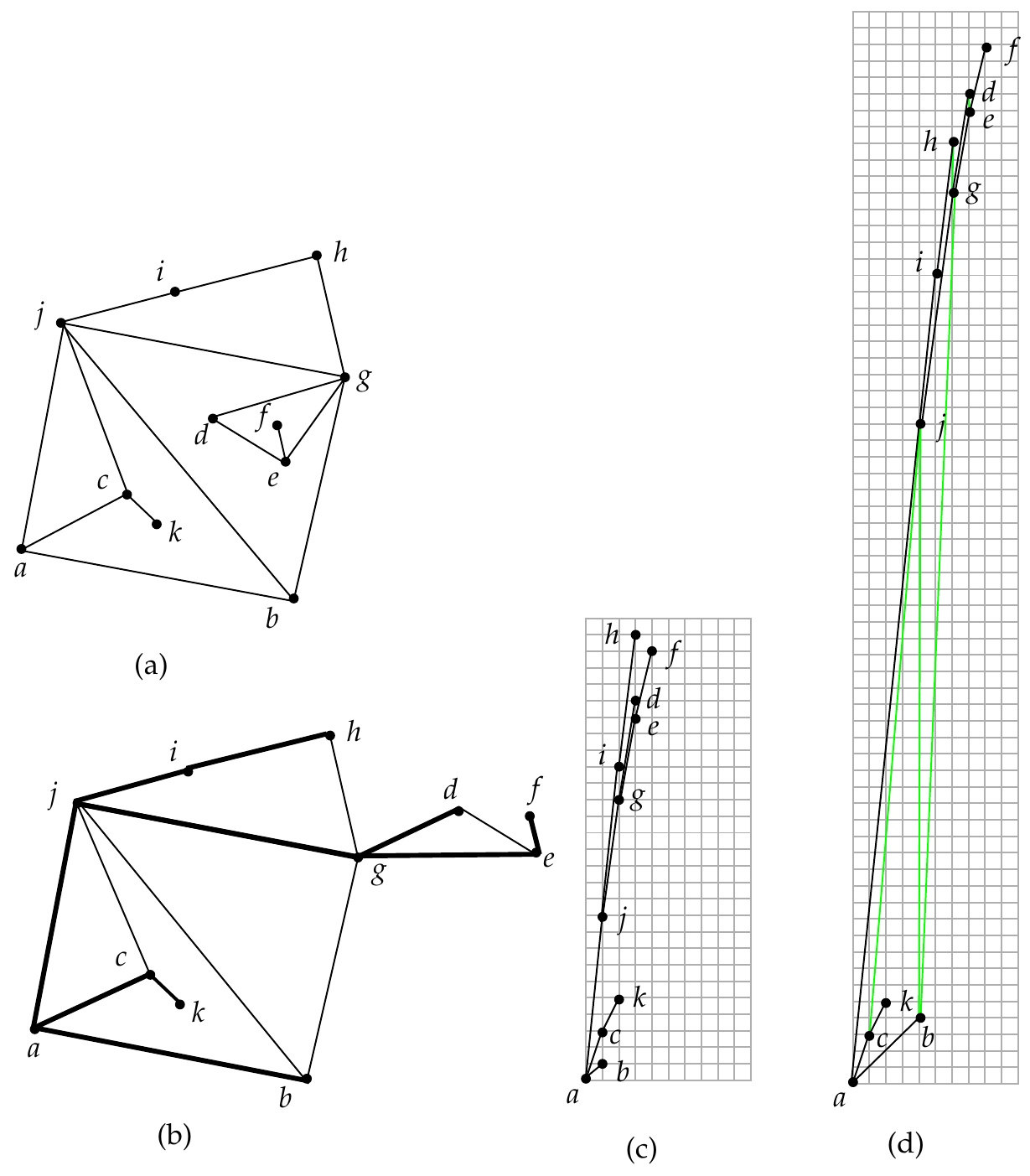}
\caption{ Illustration for an outline of our algorithm}
\label{fig:Overview}
\end{figure}

We now give an outline of  our algorithm for constructing a monotone grid drawing of a planar graph $G$. We first construct a ``good spanning tree'' $T$ of $G$ and find a monotone drawing of $T$ by the method given in~\cite{PAT12}. We then draw each non-tree edge by a straight-line segment by shifting the drawing of some subtree of $T$, if necessary. Figure~\ref{fig:Overview} illustrates the steps of our algorithm. 
The input planar graph $G$ is shown in Fig~\ref{fig:Overview}(a). We first find a planar embedding of $G$ containing a good spanning tree as illustrated in Fig.~\ref{fig:Overview}(b), where the edges of the spanning tree are drawn by thick lines.
 We then find a monotone drawing of $T$ on $O(n)\times O(n^2)$ grid using the algorithm  in~\cite{PAT12} as illustrated in Fig.~\ref{fig:Overview}(c). Finally we elongate the drawing of some edges and draw the non-tree edges of $G$ using straight-line segments as illustrated in Fig.~\ref{fig:Overview}(d).

The rest of the paper is organized as follows.
Section~\ref{preliminaries} describes some of the definitions that we have used in our paper.
 Section~\ref{monotonesp} deals with  monotone drawings of connected planar graphs.
 Finally, Section~\ref{conclusion} concludes the paper with discussions.

\section{Preliminaries}
\label{preliminaries} 
 In this section we give some definitions and present a known result. For the graph theoretic terminologies not given here, see~\cite{NR04}.

Let $G = (V,E)$ be a connected graph with vertex set $V$ and edge set $E$. A {\it subgraph} of $G$ is a graph $G'=(V',E')$ such that $V' \subseteq V$  and $E' \subseteq E$. The degree of a vertex $v$ in $G$ is denoted by $d(v)$. We denote an edge joining vertices $u$ and $v$ of $G$ by $(u,v)$. 
A pair $\{u, v\}$ of vertices in $G$ is a split pair if there exist two subgraphs $G_1 =
(V_1,E_1)$ and $G_2 = (V_2,E_2)$ satisfying the following two conditions:
1. $V = V_1 \cup V_2, V_1 \cap V_2 = \{u, v\}$; and
2. $E = E_1 \cup  E_2, E_1 \cap E_2 = \emptyset, |E_1| \geq 1, |E_2| \geq 1$.
Thus every pair of adjacent vertices is a split pair. A {\it $\{u,v\}$-split component} of  a split
pair ${u, v}$ in $G$ is either an edge $(u, v)$ or a maximal connected subgraph $H$ of
$G$ such that $\{u, v\}$ is not a split pair of $H$.
If $v$ is a vertex in $G$, then $G-v$ is the subgraph of $G$ obtained by deleting the vertex $v$ and all the edges incident to $v$. Similarly, if $e$ is an edge of $G$, then $G-e$ is a subgraph of $G$ obtained by deleting the edge $e$.
Let $v$ be a cut-vertex in a connected graph $G$. We call a subgraph $H$ of $G$ a {\it $v$-component} if $H$ consists of a connected component $H'$ of $G-v$ and all edges joining $v$ to the vertices of $H'$.

Let $G_1=(V_1,E_1)$ and $G_2=(V_2,E_2)$ be two graphs. The {\it union } of $G_1$ and $G_2$, denoted by $G_1 \cup G_2$, is a graph $G_3=(V_3,E_3)$ such that $V_3=(V_1 \cup V_2)$ and  $E_3=(E_1 \cup E_2)$.

Let $G=(V,E)$ be a graph and $T=(V,E')$ be a spanning tree of $G$. An edge $e\in E$ is called a {\it tree edge} if $e\in E'$ otherwise $e$ is said to be a {\it non-tree edge}. Let $e$ be a non-tree edge with respect to $G$ and $T$. Then by $T\cup e$ we denote the subgraph $G'$ of $G$ obtained adding edge $e$ to $T$.   The graph $G'=T \cup e$ always has a single cycle $C$ and  we call $C$ the {\it cycle induced by the non-tree edge} $e$. If $X$ is a set of edges of $G$ and $T$ is a spanning tree of $G$ then $T\cup X$ denotes the graph obtained by adding the edges in $X$ to $T$ and replacing each multi-edge by a single edge. 
Let $T$ be a rooted tree and let $u$ be a vertex of $T$. Then by $T_u$ we denote the subtree of $T$ rooted at $u$. By $T-T_u$ we denote the tree obtained from $T$ by deleting the subtree $T_u$.

A graph is {\it planar} if it can be embedded in the plane without edge intersections
except at the vertices where the edges are incident. A {\it plane graph } is a planar
graph with a fixed planar embedding. A plane graph divides the plane into some
connected regions called {\it faces}. The unbounded region is called the {\it outer face}
and each of  the other faces is called an {\it inner face}. 
Let $G$ be a plane graph. The boundary of the outer face of $G$ is called the {\it outer boundary} of $G$. We call a simple cycle induced by the outer boundary of $G$ an {\it outer cycle} of $G$.  We call a vertex $v$ of $G$ an {\it outer vertex} of $G$ if $v$ is on the outer boundary of $G$, otherwise $v$ is an {\it inner vertex} of $G$.



Let $p$ be a point in the plane and $l$ be a half-line with an end at $p$. The slope of $l$,
denoted by $slope(l)$, is the angle spanned by a counterclockwise rotation that
brings a horizontal half-line started at $p$ and directed towards increasing $x$-
coordinates to coincide with $l$.
Let $\Gamma$ be a drawing of a graph $G$ and let $(u, v)$ be an edge of $G$. We denote the direction of a half-line by $d(u,v)$ which is started at $u$  and passed through $v$. The  direction of a drawing of an edge  $e$ is denoted by $d(e)$ and the slope of the drawing of $e$ is denoted by $slope(e)$.

Let $G$ be a planar graph and $\Gamma$ be a straight-line drawing of $G$. A path $u = u_1, \ldots , u_k = v$
between vertices $u$ and $v$ in $G$ is denoted by $P(u, v)$. The drawing of the path $P(u, v)$  in $\Gamma$ is {\it monotone} with respect to a direction
$d$ if the orthogonal projections of vertices $u_1,\ldots , u_k$ on $d$ appear
in the same order as the vertices appear on the path. The drawing $\Gamma$ is a {\it monotone drawing} of $G$
if there exists a direction $d$ for every pair of vertices $u$ and $v$ such that $P(u, v)$ is monotone with respect to $d$. A monotone drawing is a {\it monotone grid drawing} if every vertex is drawn on a grid point.  The following lemma is known from~\cite{PAT12}.



\begin{lemma}
 Let $T$ be a tree of $n$ vertices. Then $T$ admits a monotone grid drawing on a grid of area $O(n) \times O(n^2)$, and such a drawing can be found in $O(n)$ time.
\label{lemmatreedrawing}
\end{lemma}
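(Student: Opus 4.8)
\noindent\emph{Proof plan.} I would prove the lemma constructively, producing an explicit drawing on the integer grid and then verifying planarity, monotonicity, and the size and time bounds. Root $T$ at an arbitrary vertex $r$ and run a depth-first traversal; number the $n-1$ edges $1,2,\dots,n-1$ in the order in which their lower (child) endpoints are first visited, and draw the edge numbered $k$ using the direction vector $(1,k)$, whose slope is $\arctan k\in(\pi/4,\pi/2)$. Put $r$ at the origin and, inductively along the traversal, place every non-root vertex $v$ at $pos(\mathrm{parent}(v))+(1,k)$, where $k$ is the number of the edge into $v$. Because a depth-first preorder visits a vertex before everything in its subtree and finishes the whole subtree $T_{c}$ of a child $c$ before moving to the next child, this numbering has two properties: (i) along every path from $r$ the edge slopes strictly increase; and (ii) for every vertex $v$ with children $c_1,\dots,c_h$ in traversal order, every slope occurring on an edge of $T_{c_i}$ or on $(v,c_i)$ is smaller than every slope occurring on an edge of $T_{c_j}$ or on $(v,c_j)$ whenever $i<j$.

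For planarity I would order the children of each vertex by the slope of the connecting edge (which is exactly the traversal order) and show, by induction on $|T_v|$, that the drawing of $T_v$ translated so that $v$ is at the origin lies in the closed wedge bounded by the two rays whose slopes are the smallest and the largest slope occurring in $T_v$. Property (ii) then forces the wedges of $T_{c_1},\dots,T_{c_h}$ to have pairwise disjoint interiors, and since each edge $(v,c_i)$ lies on the lower bounding ray of the wedge of $T_{c_i}$, no two edges of the drawing cross.

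For monotonicity, take any ordered pair $(x,y)$ and let $w$ be the lowest common ancestor of $x$ and $y$; the unique path $P(x,y)$ is the path from $x$ up to $w$ followed by the path from $w$ down to $y$. If $w\in\{x,y\}$ then $P(x,y)$ goes monotonically up or down $T$, every edge changes the first coordinate by exactly $+1$ or $-1$, so $P(x,y)$ is monotone in the horizontal direction. Otherwise let $c_p,c_q$ ($p\neq q$) be the children of $w$ through which $P(x,y)$ reaches $x$ and $y$, let $\sigma_1<\dots<\sigma_a$ be the slopes met on $w\to x$ and $\tau_1<\dots<\tau_b$ those met on $w\to y$; by (ii) one of these sets lies entirely below the other, say $\sigma_a<\tau_1$ (the other case is symmetric). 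Traversing $P(x,y)$ from $x$ to $y$, the edge directions are $\sigma_a+\pi,\dots,\sigma_1+\pi$ (the upward edges, traversed backwards) followed by $\tau_1,\dots,\tau_b$; since $\sigma_a<\tau_1$ and all $\sigma_i,\tau_j\in(\pi/4,\pi/2)$, all of these directions lie in the arc $[\tau_1,\sigma_a+\pi]$, whose length $\sigma_a+\pi-\tau_1$ is strictly less than $\pi$. Hence some unit vector $d$ has strictly positive dot product with every edge vector of $P(x,y)$, so the orthogonal projections of the vertices of $P(x,y)$ onto the line of direction $d$ appear in path order, i.e.\ $P(x,y)$ is monotone. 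Thus the drawing is a monotone drawing.

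Finally the quantitative claims are immediate: the first coordinate of a vertex is its depth, hence at most $n-1$; its second coordinate is the sum of the distinct edge numbers on its root path, hence at most $1+2+\dots+(n-1)<n^2/2$; and the traversal, the slope assignment, and the coordinate computation all run in $O(n)$ time. The delicate point, and the one I expect to be the crux, is property (ii): it is \emph{not} enough that the slopes merely increase along root-to-leaf paths (a tree with two length-two branches hanging from the root already yields a non-monotone drawing under that weaker rule), and it is exactly to guarantee (ii) that the edges must be numbered in depth-first preorder rather than, say, breadth-first order.
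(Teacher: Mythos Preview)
Your proof is correct and follows essentially the same approach as the construction the paper describes (and attributes to Angelini et al.~\cite{PAT12}): assign the integer slopes $1,\dots,n-1$ to the $n-1$ edges via a depth-first traversal so that distinct subtrees receive disjoint slope intervals, draw each edge with its assigned direction vector $(1,k)$, and read off planarity, monotonicity, and the $O(n)\times O(n^2)$ bound from this slope-disjointness. The only noteworthy difference is the traversal order: the paper assigns slope $s_i$ to the $i$-th vertex in a \emph{postorder} traversal, so the edge entering a vertex $v$ receives the \emph{largest} slope used anywhere in $T_v$, whereas your preorder numbering makes it the \emph{smallest}. Either convention yields the key property you call~(ii) (the slope intervals of sibling subtrees are disjoint and separated by the slopes of the connecting edges), and your monotonicity argument via the open half-plane containing all edge directions of $P(x,y)$ is exactly the standard one for slope-disjoint drawings. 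Your closing remark that mere monotonicity of slopes along root-to-leaf paths is insufficient, and that the DFS ordering is what guarantees~(ii), is well taken and matches why the cited construction works.
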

In this paper we use a modified version of the algorithm for monotone grid drawing of a tree in~\cite{PAT12}, which we call  Algorithm {\bf Draw-Monotone-Tree} throughout this paper.
   Algorithm {\bf Draw-Monotone-Tree} first assigns a slope to each vertex of a planar embedded tree then obtains a slope-disjoint drawing of the tree which is monotone. 
A brief description of the algorithm is given in the rest of this section.
Let $T$ be an embedded rooted tree of $n$ vertices. (Note that in~\cite{PAT12} $T$  is not embedded, but here we use $T$ as an embedded tree for the sake of our algorithm.)
Let $S= \{s_1, s_2,\ldots, s_{n-1}\}= \{1/1, 2/1, 3/1,\ldots, (n-1)/1\}$ be an ordered set of $n$ slopes in increasing order,  where each slope is represented by the ratio $y/x$.  
Let $v_1, v_2, \ldots, v_n$ be an ordering of vertices in $T$ in a counterclockwise postorder traversal. (In a counterclockwise postorder traversal of a rooted ordered tree, subtrees rooted at the children of the root are recursively traversed in counterclockwise order and then the root is visited.) Then we assign the slope $s_i$ to vertex $v_i$ $(i\ne n)$.
Let  $u_1, u_2, \ldots,u_k$ be the children of $v$ in $T$. Then the subtree $T_{u_i}$ gets $|T_{u_i}| $ consecutive elements of $S$ from the $(1+\sum^{i-1}_{j=1}   |T_{u_j}|)$-th to the $(\sum^{i}_{j=1}|T_{u_j}|)$-th. Let $v'$ be the parent of $v$. If $v$ is not the root of $T$  then the drawing of the edge $e=(v',v)$ will be a straight-line with slope $s_i$.

We now describe how to find a monotone grid drawing of $T$ using the slope assigned to each vertex of $T$. We first draw the root vertex $r$ at $(0,0)$, and then use a counter clockwise preorder traversal for drawing each vertex of $T$. (In a counterclockwise preorder traversal of a rooted ordered tree, first the root is visited and then the subtrees  rooted at the children   of the root are visited recursively in counterclockwise order.)  We fix the position of a vertex $u$ when we traverse $u$. Note that when we traverse $u$, the position of the parent $p(u)$ has already been fixed. Let $(p_x (u), p_y (u))$ be the position of $p(u)$. Then we place $u$ at grid point $(p_x (u) + x_b , p_y (u) + y_b )$, where $s_b = y_b /x_b$. Figure~\ref{fig:treeexample}(b) illustrates a monotone grid drawing of the tree as shown in the Fig.~\ref{fig:treeexample}(a).
Algorithm {\bf Draw-Monotone-Tree} computes a slope-disjoint monotone drawing of a tree on an  $O(n) \times O(n^2)$ grid in linear time~\cite{PAT12}. 


\begin{figure}
\centering
\includegraphics[scale=0.8]{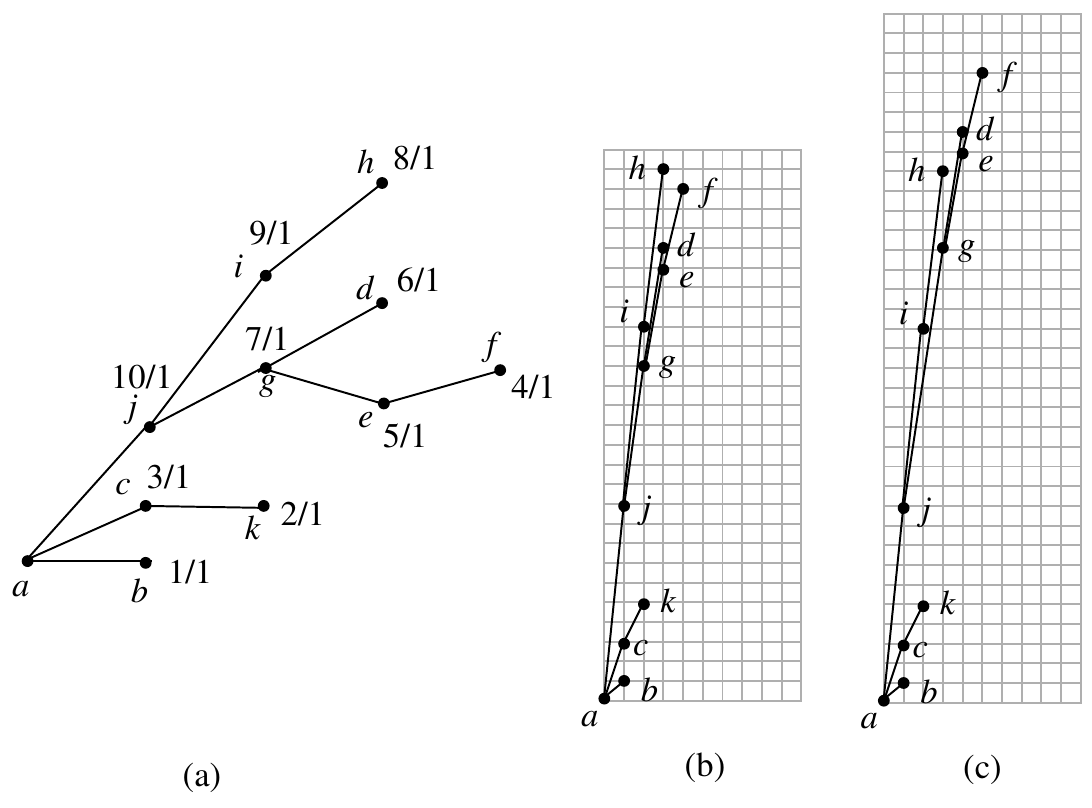}
\caption{(a) A tree $T$ with assigned slope to each vertex, (b) a monotone drawing $\Gamma$ of $T$ and (c) a monotone drawing $\Gamma'$ of $T$ with elongation of the edge $(j,g)$.}
\label{fig:treeexample}
\end{figure}


\section{Monotone Grid Drawings}
\label{monotonesp}

 In this section we show that every connected planar graph of $n$ vertices has a monotone grid drawing on an $O(n)\times O(n^2)$ grid.
 
 Let $G$ be a planar graph and let $G_\phi$ be a plane embedding of $G$. Let $T$ be an ordered rooted spanning tree of $G_\phi$ such that the root $r$ of $T$ is an outer vertex $G_\phi$, and the ordering of the children of each vertex $v$ in $T$ is consistent with the ordering  of the neighbors of $v$ in $G_\phi$. 
Let $P(r,v)=(r=u_1), u_2, \ldots, (v=u_k)$ be the path in $T$ from the root $r$ to a vertex $v\ne r$.
The path $P(r,v)$ divides the children of $u_i$, $(1\le i < k)$, except $u_{i+1}$, into two groups; the left group $L$ and the right group $R$. A child $x$ of $u_i$ is in group $L$  and denoted by $u_{i}^L$ if the edge $(u_i,x)$ appears before the edge $(u_i, u_{i+1})$ in clockwise ordering of the edges incident to $u_i$ when the ordering is started from the edge $(u_i,u_{i+1})$, as illustrated in the Fig.~\ref{fig:gridspliting}(a). Similarly, a child $x$ of $u_i$ is in the group $R$ and denoted by $u_{i}^R$ if the edge $(u_i,x)$ appears after the edge $(u_i, u_{i+1})$ in clockwise order of the edges incident to $u_i$ when the ordering is started from the edge $(u_i,u_{i+1})$.
 We call $T$  a {\it good spanning} tree of $G_\phi$ if every vertex $v$ $(v\ne r)$ of $G$ satisfies the following two conditions with respect to $P(r,v)$.
\begin{enumerate}
\item [(Cond1)] $G$ does not have a non-tree edge $(v,u_i)$, $i<k$; and

\item [(Cond2)]
the edges of $G$ incident to the vertex $v$ excluding $(u_{k-1},v)$ can be partitioned into three disjoint (possibly empty) sets $X_v$, $Y_v$ and $Z_v$ satisfying the following conditions (a)-(c) (see Fig.~\ref{fig:gridspliting}(b)): 
\begin{enumerate}
\item [(a)] Each of $X_v$ and $Z_v$ is a set of consecutive non-tree edges and $Y_v$ is a set of consecutive tree edges.
\item [(b)] Edges  of set $X_v$, $Y_v$ and $Z_v$ appear clockwise in this order from the edge $(u_{k-1}, v)$.
\item [(c)] For each edge $(v,v')\in X_v$, $v'$ is contained in $T_{u_i^L}$, $i<k$, and for each edge $(v,v')\in Z_v$, $v'$ is contained in $T_{u_i^R}$, $i<k$.

\end{enumerate}  


\end{enumerate}  
\begin{figure}
\centering
\includegraphics[scale=0.7]{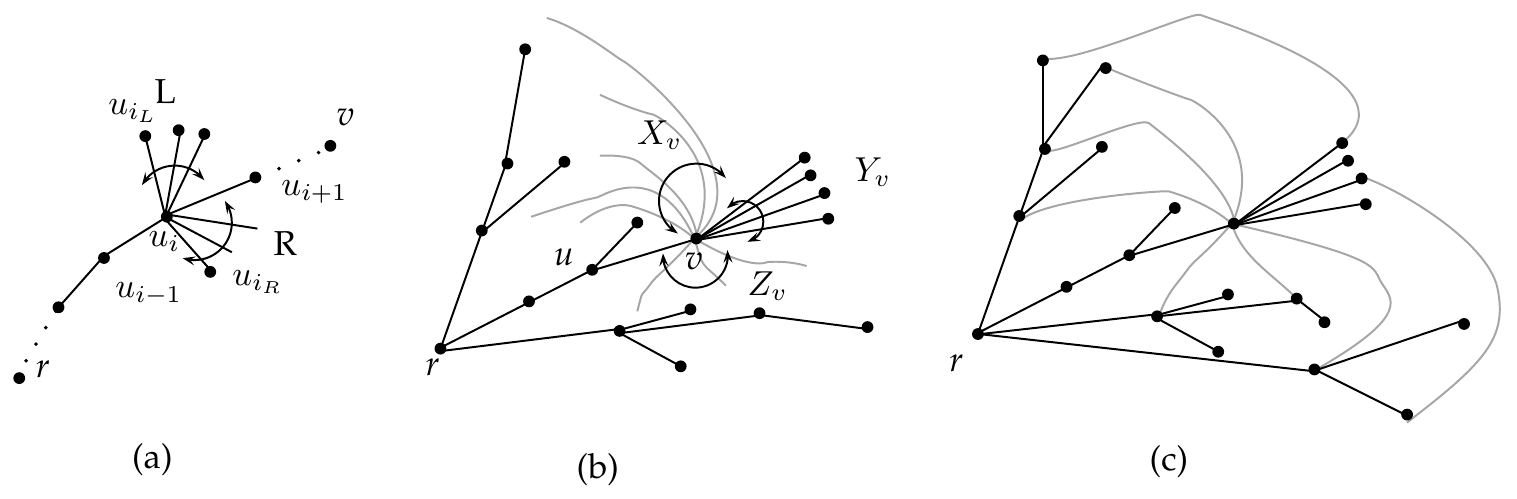}
\caption{(a) An illustration for $P(r,v)$, $L$ and $R$ groups,   (b) an illustration for $X_v$, $Y_v$ and $Z_v$ sets of edges, and (c) an illustration for a good spanning tree $T$ on $G_\phi$ where bold edges are tree edges.}
\label{fig:gridspliting}
\end{figure}
Figure~\ref{fig:gridspliting}(c) illustrates a good spanning tree $T$ in a plane graph. 
The following two lemmas are based on the properties of a good spanning tree and on Lemma~\ref{lemmatreedrawing}.
\begin{lemma}
\label{lemma2}

Let $T$ be a good spanning tree of $G_\phi$. Let $\Gamma$ be a monotone drawing of $T$. Let $\Gamma'$ be a straight-line drawing of $T$ obtained from $\Gamma$ by elongation of the drawing of an edge $e$ of $T$ preserving the slope of $e$. Then $\Gamma'$ is also a monotone drawing. (See Fig.~\ref{fig:treeexample}(c), where the edge $(j,g)$ is elongated.)

\end{lemma}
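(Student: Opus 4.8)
The plan is to use the \emph{slope-disjoint} structure of the monotone drawing $\Gamma$ produced by Algorithm \textbf{Draw-Monotone-Tree}, together with the fact that elongating a single edge $e=(p,q)$ of $T$ preserving its slope is equivalent to translating the entire subtree $T_q$ (where $q$ is the child endpoint of $e$) by a positive multiple of the direction vector of $e$. First I would recall the key invariant of the tree-drawing algorithm: for every vertex $v$ on the path $P(r,v)=u_1,\ldots,u_k=v$, the slopes assigned to the tree edges along this path are strictly increasing (this is exactly what makes $P(r,v)$ monotone with respect to a direction roughly ``between'' $slope(u_1,u_2)$ and $slope(u_{k-1},u_k)$), and more generally that the drawing is slope-disjoint: the cone of directions spanned by the edges of any subtree $T_w$ is contained in an interval that depends only on the slopes assigned around $w$, and these intervals nest along root-to-leaf paths.

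The main work is to check that after elongating $e=(p,q)$, every pair of vertices $x,y$ still has a monotone path. I would split into cases according to the position of $e$ relative to the tree path $P(x,y)$ (which in a tree is the unique $x$--$y$ path). \textbf{Case 1:} $e$ is not an edge of $P(x,y)$. Then the translation of $T_q$ moves $x$ and $y$ together if both lie in $T_q$, or moves neither if both lie outside $T_q$; in either sub-case the drawing of $P(x,y)$ is merely translated as a rigid object, so it remains monotone with respect to the same direction. \textbf{Case 2:} $e=(p,q)$ lies on $P(x,y)$, say $P(x,y)$ traverses $\ldots,p,q,\ldots$ with $x$ on the $p$-side and $y$ inside $T_q$. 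The sub-path from $x$ to $p$ is unchanged, the edge $(p,q)$ keeps its slope but gets longer, and the sub-path from $q$ to $y$ is translated rigidly. Since all three pieces are drawn with the same set of slopes as before, and the original $P(x,y)$ was monotone with respect to some direction $d$ whose orthogonal complement separates all edge-directions of $P(x,y)$ to one side, I would argue that this same $d$ still works: lengthening one segment in a fixed direction cannot reverse the order of the orthogonal projections onto $d$, because each edge of the path still has a strictly positive component along $d$. The crucial point is that the direction $d$ witnessing monotonicity of $P(x,y)$ depends only on the \emph{slopes} of the edges of $P(x,y)$, not on their lengths, so it is unaffected by the elongation.

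The main obstacle I anticipate is not the monotonicity argument itself, which is essentially a ``positive components in a common direction are preserved under positive scaling of individual edges'' observation, but rather verifying that $\Gamma'$ is still a \emph{planar} straight-line drawing, i.e.\ that elongating $e$ introduces no edge crossings. This is where the \emph{good spanning tree} hypothesis (Cond1 and Cond2) and the slope-disjointness of $\Gamma$ must be combined: translating $T_q$ away from $p$ along the ray of slope $slope(e)$ keeps $T_q$ inside the open wedge it originally occupied at $p$ (by slope-disjointness the whole subtree $T_q$ lies in a cone at $q$ whose apex-to-$p$ extension is disjoint from the rest of the drawing), so no tree edge is crossed; and since in this lemma we are still only drawing the tree $T$ (the non-tree edges of $G$ are dealt with later), there are no other edges to worry about. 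I would therefore structure the proof as: (i) restate the subtree-translation interpretation of elongation; (ii) invoke slope-disjointness to conclude planarity is preserved; (iii) run the two-case analysis above to conclude monotonicity is preserved; and (iv) observe that $\Gamma'$ is a grid drawing whenever the elongation factor is an integer, which is all that is needed later.
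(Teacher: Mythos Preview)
Your proposal is correct and follows essentially the same approach as the paper: interpret the elongation of $e=(u,v)$ as a rigid translation of the subtree $T_v$ along the direction of $e$, and then observe that since no edge slope changes, every tree path remains monotone with respect to the same witnessing direction as before. The paper's own proof is in fact just your Case~2 punchline (``the direction $d$ depends only on the slopes of the edges of $P(x,y)$, not on their lengths'') stated in three sentences, and it does not separately address planarity; your slope-disjoint cone argument for planarity is more careful than the paper but not a different route, and you are right that the good-spanning-tree conditions (Cond1), (Cond2) play no role here since only the tree $T$ is being drawn.
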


   \begin{proof}
Let $e=(u,v)$ be an edge of $T$ where $u$ is the parent of $v$ in $T$. The elongation of the drawing of $e$ does not change $slope(e)$ and the drawing of $T_v$ in $\Gamma'$ is shifted outwards preserving its drawing in $\Gamma$. The drawing of $T- T_v$ is remained same in $\Gamma'$. Since the elongation does not change the slope of the drawing of any edge, the new drawing $\Gamma'$ preserves monotone drawing of $T$. 
 
   \end{proof}

\begin{lemma}
\label{lemma3}

Let $T$ be a good spanning tree of $G_\phi$. 
 Let $v$ be a vertex in $G$, and let $u$ be the parent of $v$ in $T$. Let $X\subseteq X_v$ and $Z\subseteq Z_v$. 
Assume that $\Gamma$ is a the  monotone drawing of $T\cup X \cup Z$ where a monotone path exists between every pair of vertices in the drawing of $T$ in $\Gamma$. 
If a straight-line drawing  $\Gamma'$ of $T\cup X \cup Z$ is obtained from $\Gamma$ by elongation of the drawing of the edge $(u,v)$, then $\Gamma'$ is a monotone drawing of  $T\cup X \cup Z$ where a monotone path exists between every pair of vertices in the drawing of $T$ in $\Gamma'$. 

\end{lemma}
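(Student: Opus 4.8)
The plan is to reduce Lemma \ref{lemma3} to Lemma \ref{lemma2} by checking that the elongation of the tree edge $(u,v)$, which by Lemma \ref{lemma2} leaves the drawing of $T$ monotone, also leaves the drawings of the non-tree edges in $X \cup Z$ as straight-line segments that do not destroy planarity or introduce crossings. First I would recall the effect of the elongation: sliding $v$ (and with it the whole subtree $T_v$) outward along the fixed slope $slope(u,v)$, while $T - T_v$ stays put. By Lemma \ref{lemma2}, the drawing of $T$ in $\Gamma'$ is monotone, so the clause ``a monotone path exists between every pair of vertices in the drawing of $T$ in $\Gamma'$'' is immediate; what remains is to argue that $\Gamma'$ is still a valid straight-line drawing of $T \cup X \cup Z$, i.e., that each edge of $X \cup Z$ is still drawn as a noncrossing segment.

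The key structural observation, coming from the definition of a good spanning tree, is where the endpoints of edges in $X_v$ and $Z_v$ live. For each $(v,v') \in X_v$ we have $v' \in T_{u_i^L}$ for some $i < k$ on the path $P(r,v)$, and similarly for $Z_v$ with $T_{u_i^R}$. Crucially, none of these subtrees $T_{u_i^L}$ or $T_{u_i^R}$ is contained in $T_v$ — they hang off ancestors of $v$, on the left or right side of the root path — so every such $v'$ lies in $T - T_v$, which is drawn identically in $\Gamma$ and $\Gamma'$. Thus under the elongation, for each edge $(v,v') \in X \cup Z$, exactly one endpoint ($v'$) stays fixed and the other ($v$) moves; the segment $vv'$ simply gets redrawn between the old fixed point $v'$ and the new position of $v$. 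I would then argue that this redrawn segment stays inside the face it belonged to: the elongation is a ``stretching'' of a wedge of the plane, and since the edges of $X_v$ (resp. $Z_v$) are consecutive and lie on one side of the tree path (Cond2(a),(b),(c)), the region swept out as $v$ slides outward along $slope(u,v)$ contains the old segments $vv'$, and continuity of the motion together with the fact that no vertex of $T - T_v$ is crossed (the drawing of $T$ stays planar and monotone throughout, by applying Lemma \ref{lemma2} to every intermediate amount of elongation) guarantees no new crossing is created among edges of $T \cup X \cup Z$.

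More carefully, I would phrase the non-crossing argument as an invariant maintained continuously: parametrize the elongation by $t \in [1, \lambda]$ (scaling the length of $e$), let $\Gamma_t$ be the intermediate drawing, and show $\Gamma_t$ is a planar straight-line drawing of $T \cup X \cup Z$ for all $t$. At $t=1$ this is $\Gamma$. As $t$ increases, the only moving points are those of $T_v$, all translating by the same vector along $d(u,v)$; an edge of $T \cup X \cup Z$ can gain a crossing only at the instant two edges become collinear or a vertex hits an edge, and because the endpoints $v'$ of the $X \cup Z$ edges all lie in $T-T_v$ while $v \in T_v$, each such edge behaves exactly like a "virtual tree edge" attaching $T_v$ to $T - T_v$; the good-spanning-tree conditions ensure these virtual attachments sit in the correct angular sectors (between $(u_{k-1},v)$ and the relevant $u_i^L$ or $u_i^R$ subtrees) so that the sector into which $T_v$ is being pushed remains empty. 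Hence no degeneracy occurs, $\Gamma_\lambda = \Gamma'$ is planar, and since Lemma \ref{lemma2} already certifies that the $T$-part is monotone, $\Gamma'$ satisfies the claim.

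The main obstacle I anticipate is making the ``the sector stays empty / no new crossing'' step fully rigorous: one must use conditions (Cond2)(a)–(c) and the consecutiveness of $X_v$, $Y_v$, $Z_v$ around $v$ to pin down precisely which faces the edges of $X \cup Z$ traverse and to verify that elongating $e$ only enlarges those faces in the relevant direction rather than sweeping an $X\cup Z$ edge across a vertex of $T-T_v$. The monotonicity and tree-drawing parts are essentially free given Lemma \ref{lemma2} and the earlier description of Algorithm \textbf{Draw-Monotone-Tree}; it is the planarity bookkeeping for the non-tree edges that carries the weight.
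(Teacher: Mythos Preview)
Your reduction to Lemma~\ref{lemma2} for the tree part is exactly what the paper does, and your observation that every endpoint $v'$ of an edge in $X\cup Z$ lies in $T-T_v$ (hence is fixed under the elongation) is also used implicitly in the paper. Where you diverge is in the non-crossing argument for the redrawn segments $vv'$. You set up a continuous deformation $\Gamma_t$ and want to conclude that no degeneracy occurs, but as you yourself flag, the ``sector stays empty'' step is the whole content of the lemma and your argument does not actually pin it down: nothing in Cond2 alone tells you that, in an arbitrary monotone drawing, sliding $v$ outward along $slope(u,v)$ keeps $v$ visible from each $v'$.

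The paper's proof replaces your topological argument with a one-line slope computation that exploits the \emph{specific} drawing produced by Algorithm \textbf{Draw-Monotone-Tree}. Because slopes are assigned by a counterclockwise postorder numbering, the slope $m$ of $(u,v)$ is strictly sandwiched between every slope occurring in the subtrees $T_{u_i^L}$ (all larger than $m$) and every slope in $T_{u_i^R}$ (all smaller than $m$). Since the endpoints of $X_v$ live in the former and those of $Z_v$ in the latter, pushing $v$ outward along direction $m$ cannot sweep the segment $vv'$ across any vertex or edge of the drawing: $v$ stays visible from each such $v'$ for all amounts of elongation. This is precisely the concrete fact that would let your continuity argument terminate, and without it your ``no degeneracy occurs'' step is an assumption rather than a conclusion. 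In short, the missing ingredient in your proposal is the slope inequality $m_z < m < m_x$ coming from the postorder slope assignment; once you insert it, your argument collapses to the paper's.
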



  \begin{proof}

Let $r$ be the root of $T$. Let $m$  be the slope assigned to the vertex $v$ in $T$.


Let $M_X$ and $M_Z$ be the sets of slopes assigned to $T_{u_i^L}$ and $T_{u_i^R}$, respectively.
According to assignment of slopes, for any $m_x\in M_X$ and $m_z\in M_Z$ the relation $m_x>m>m_z$ holds.


Since each vertex in $X$ and $Z$ are visible from the vertex $v$ in $\Gamma$ and  $m_x<m<m_z$,
 $v$ must be visible from  each vertex in $X$ and $Z$ even after elongation of edge $(u,v)$ without changing the slope of $(u,v)$. 
Note that the elongation only changes the slopes of the drawings of non-tree edges  in $X$ and $Z$. The drawing of $T_v$ is shifted outwards preserving the slopes of the   edges in $T_v$ and the drawing of $T - T_v$ is remained same. Let $\Gamma'$ be the new drawing of $T\cup X \cup Z$. Then obviously the edges in $X$ and in $Z$ does not produce any edge crossing in $\Gamma'$. By Lemma~\ref{lemma2}, the elongation of the edge $(u,v)$ does not break the monotone property in the drawing of $T$ in $\Gamma'$. Thus a monotone path exists between every pair of vertices in the drawing of $T$ in $\Gamma'$.

  \end{proof}

We  now have the following lemma on monotone grid drawings of a plane graph with a good spanning tree.

\begin{lemma}
\label{theo1}
Let $G$ be a planar graph of $n$ vertices and let $G_\phi$ be a plane embedding of $G$. Assume that $G_\phi$ has a good spanning tree $T$.
Then $G_\phi$ admits a   monotone grid drawing on an $O(n)\times O(n^2)$ grid. 
\end{lemma}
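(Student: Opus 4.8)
The plan is to process the non-tree edges of $G_\phi$ one at a time, starting from a monotone grid drawing of the good spanning tree $T$ produced by Algorithm \textbf{Draw-Monotone-Tree}, and to add each non-tree edge as a straight-line segment, elongating certain tree edges when necessary so that the new segment neither crosses an existing edge nor destroys monotonicity. First I would fix a convenient processing order on the non-tree edges: by (Cond1), every non-tree edge incident to a vertex $v$ (other than the tree edge to its parent) joins $v$ either to a vertex of some $T_{u_i^L}$ (if it lies in $X_v$) or of some $T_{u_i^R}$ (if it lies in $Z_v$), where $P(r,v)=u_1,\dots,u_k=v$. I would orient each non-tree edge towards its endpoint that is deeper in $T$, and handle the edges in, say, decreasing order of the depth of that deeper endpoint (or, equivalently, process vertices $v$ in reverse-preorder and add all edges of $X_v\cup Z_v$ at $v$). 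The key structural fact to exploit, inherited from Lemma~\ref{lemma3}, is that the slopes assigned to $T_{u_i^L}$ are all larger than the slope $m$ of $v$, and those of $T_{u_i^R}$ are all smaller; this is exactly what keeps the far endpoint of a non-tree edge in the "wedge" that remains visible from $v$ after we stretch the edge $(u,v)$ from $v$ to its parent.

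The core inductive step is: given a monotone drawing of $T\cup X\cup Z$ for some $X\subseteq X_v$, $Z\subseteq Z_v$ in which a monotone path joins every pair of vertices of $T$, and given one further non-tree edge $f=(v,v')$ in $X_v\setminus X$ (the case $Z_v\setminus Z$ is symmetric), I want to draw $f$ as a straight segment. The vertex $v'$ sits in some subtree $T_{u_i^L}$; by the slope ordering and by the fact that $T_{u_i^L}$ and $T_v$ occupy disjoint angular ranges around $u_i$, the whole subtree $T_{u_i^L}$ is drawn "to the left of" the path $P(u_i,v)$, so $v'$ is visible from $v$ except possibly for an obstruction caused by pieces of $T_v$ hanging below $v$ along the segment $vv'$. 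Here is where Lemma~\ref{lemma3} does the work: elongating the tree edge $(u,v)$ (where $u$ is the parent of $v$) pushes $T_v$ outward along a fixed slope without changing any slope in $T_v$ or in $T-T_v$, and — crucially — keeps $v'$ visible from $v$ because $v'$'s direction from $v$ lies strictly between the slope bounds. So for a large enough elongation of $(u,v)$, the segment $vv'$ becomes obstruction-free; draw $f$ along it. By Lemma~\ref{lemma3} again, the resulting drawing is still a monotone drawing of $T\cup X\cup\{f\}\cup Z$ with a monotone path between every pair of vertices of $T$. Because (Cond2)(a)–(c) tells us $X_v$ and $Z_v$ are each a consecutive block of non-tree edges pointing into left-subtrees and right-subtrees respectively, and (Cond2)(b) fixes their cyclic order at $v$, adding them one by one in that radial order guarantees planarity of the partial drawing is maintained at $v$; doing this for every $v$ exhausts all non-tree edges, since by (Cond1) a non-tree edge incident to the deeper endpoint's side was already counted there.

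After all non-tree edges are inserted we have a planar straight-line drawing of $G_\phi$ in which, for every pair of vertices $x,y$, the tree path $P(x,y)$ in $T$ is still drawn monotonically (monotonicity of tree paths is preserved at each step), so the drawing is a monotone drawing of $G_\phi$. For the grid-size bound: the tree drawing starts on an $O(n)\times O(n^2)$ grid by Lemma~\ref{lemmatreedrawing}; each elongation multiplies the current extent by at most a bounded factor, or more carefully, the slope of $(u,v)$ is some $i/1$ with $i\le n-1$, so one elongation step can be absorbed into the existing $O(n)\times O(n^2)$ asymptotics if one argues that the total elongation needed along any root-to-leaf path is $O(n^2)$ in the $y$-direction and $O(n)$ in the $x$-direction; I would formalize this by a single global rescaling, choosing all elongation amounts up front as functions of $n$ rather than adaptively, so that the final drawing fits an $O(n)\times O(n^2)$ grid. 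The main obstacle I anticipate is precisely this bookkeeping: proving that the elongations required to clear all obstructions simultaneously (over all non-tree edges and all branches of $T$) compose to only a polynomial, grid-preserving blow-up — the visibility and monotonicity claims themselves are essentially handed to us by Lemmas~\ref{lemma2} and \ref{lemma3}, but controlling the cumulative stretch so the area stays $O(n)\times O(n^2)$, and showing a single linear-time pass suffices, is the delicate part.
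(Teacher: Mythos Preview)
Your high-level plan---draw $T$ with Algorithm \textbf{Draw-Monotone-Tree} and then insert the non-tree edges one at a time using elongations justified by Lemmas~\ref{lemma2} and~\ref{lemma3}---matches the paper's. But the specific mechanism you propose differs from the paper's and leaves a genuine gap. Lemma~\ref{lemma3} only controls the non-tree edges in $X_v\cup Z_v$ incident to the \emph{single} vertex $v$ whose parent edge you stretch. In your reverse-preorder scheme, by the time you reach $v$ every descendant $w$ of $v$ has already had its non-tree edges drawn; by (Cond2)(c) such an edge $(w,w')$ can have $w'\in T_{u_i^{L}}$ or $T_{u_i^{R}}$ for an ancestor $u_i$ of $u$, hence $w'\notin T_v$. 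Elongating $(u,v)$ then moves $w$ but not $w'$, and nothing in Lemmas~\ref{lemma2}--\ref{lemma3} rules out a crossing. Your own processing order therefore breaks the hypothesis under which Lemma~\ref{lemma3} applies, and the ``for a large enough elongation'' step is not justified once several previously drawn non-tree edges straddle the cut $T_v\,/\,T-T_v$.

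The paper avoids this interaction by a different device. It inducts on the number of non-tree edges, always removing a non-tree edge $e=(u,v)$ lying on the \emph{outer boundary} of $G_\phi$, draws $G_\phi-e$ by induction, and then elongates \emph{both} parent edges $(u',u)$ and $(v',v)$ so that $u$ and $v$ land on the common vertical line $x=p_x+1$ just beyond the relevant part of the drawing; $e$ is drawn as a vertical segment on that line. The outer-boundary choice together with (Cond2) ensures that no tree edge of $T_u$ (respectively $T_v$) sits between the parent edge and $e$ on the outer side at $u$ (respectively $v$), so the vertical segment is crossing-free. This two-sided shift to a fresh vertical line is also what delivers the grid bound you could not pin down: each non-tree edge advances the $x$-extent by one unit, so the $x$-extent is $O(n)$, and since every assigned slope is an integer at most $n-1$, the $y$-extent stays $O(n^2)$. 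The missing idea in your proposal is precisely this outer-to-inner ordering combined with the double shift onto a vertical line.
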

  \begin{proof}

Let $T$ be a good spanning tree of $G_\phi$. We prove the claim by induction on the number $z$ of non-tree edges in $G$ with respect to $T$. 
Algorithm {\bf Draw-Monotone-Tree} uses a counterclockwise postorder traversal for finding a vertex ordering in the tree and assigns slope to each vertex of the tree using that ordering. 
Note that the ordering of the vertices is fixed once the child of $r$ that has to be visited first is fixed. Let $T$ be a good spanning tree of $G_\phi$ and let $r$  be the root of $T$. We take a child $s$ of $r$ as the first child to be visited in counterclockwise postorder traversal such that $s$ is an outer vertex of $G_\phi$ and if $s$ is on an outer cycle $C$ of $G_\phi$ then $s$ is the counterclockwise neighbor of $r$ on $C$. We call the edge $(r,s)$ the {\it reference edge} of $T$. (Later in Corollary 1 we show that such a reference edge always exists.)  
By induction on the number $z$ of non-tree edges  of  $G_\phi$, we now prove the claim that  $G_\phi$  admits a monotone grid drawing on     an $O(n) \times O(n^2) $ grid and a monotone path exists between every pair of vertices of  $G_\phi$  through the edges of $T$ in the drawing.


We first assume that $z=0$. In this case $G_\phi=T$. 
We then find monotone drawing $\Gamma$ of $T$ on an $O(n) \times O(n^2) $ grid using Algorithm {\bf Draw-Monotone-Tree} taking the reference edge $(r,s)$ as the starting edge for traversals (counterclockwise postorder traversal for slope assignment and counterclockwise preorder traversal  for drawing vertices). Since  $G_\phi=T$, $\Gamma$ is a monotone drawing of $G_\phi$. That is, a monotone path exists between every pair of vertices of $T$ in $\Gamma$.


We thus assume that $z>0$ and the claim holds for any plane graph $G_\phi$ with number of non-tree edges $z'$, where $z'<z$.

Let $G_\phi$ have $z$ non-tree edges with respect to $T$ and let  $e=(u,v)$ be a non-tree edge of the outer boundary of $G_\phi$.

Let $m_u$ and $m_v$ be the  slopes assigned to the vertices $u$ and $v$, respectively in $T$ by Algorithm {\bf Draw-Monotone-Tree}. Without loss of generality let us assume $m_u>m_v$. Let $w$ be the common ancestor of $u$ and $v$ in $T$ and let $u'$ and $v'$ be the parents of $u$ an $v$ in $T$. According to (Cond1) $u$ does not lie on the path $P(v,r)$ and $v$ does not lie on the path $P(u,r)$. Let $C=\{P(u,w)\cup P(v,w) \cup (u,v)\}$ and $G_\phi'$= $\{P(r,w) \cup C\}$. Clearly $G_\phi' - (u,v)$ has $l_1$ non-tree edges where $l_1<z$.  By induction hypothesis, $G_\phi'-(u,v)$ has a straight-line monotone drawing on $O(n) \times O(n^2) $ grid where the edges in $T$ are drawn with the slope assigned to them and a monotone path exists between every pair of vertices through the edges in $T$.  Let $\Gamma'$ be the drawing of $G_\phi' - (u,v)$ in  $\Gamma$.  Let $p_x$ be the largest $x$-coordinate used for the drawing of  $\Gamma'$. We now shift the drawing of $T_u$ and $T_v$ on the line $x=p_x+1$ by preserving slopes of the drawings of the edges $(u',u)$ and $(v', v)$. Since the slopes are integer numbers, it guarantees that all vertices remain on grid points after the shifting operation.
 According to Lemma~\ref{lemma2} elongations of $(u',u)$ and $(v', v)$ do not produce any edge crossing in the drawing.  According to (Cond2) in the good spanning tree $T$, $e$ belongs to set $Z_u$ and set $X_v$. Then no tree edge indecent to $u$ exists between the edge $(u',u)$ and $(u,v)$ in counterclockwise from the edge $(u',u)$, and no tree edge indecent to $v$ exists between the edge $(v',v)$ and $(v,u)$ in clockwise from the edge $(v',v)$. (Remember that we have used counterclockwise postorder traversal starting from a reference edge for ordering the vertices in algorithm {\bf Draw-Monotone-Tree}.)  Hence we can draw the edge $e$ on the line $x=p_x+1$ by a straight-line segment without any edge crossings.  In worst case,  $y$-coordinate can be at most $O(n^2)$ + $O(n^2)$. Hence the drawing takes a grid of  size $O(n) \times O(n^2)$.
 
 \end{proof}


We now prove that every connected planar graph $G$ has an embedding $G_\phi$ where a good spanning tree $T$ of $G_\phi$ exists. We give a constructive proof for our claim. Before giving our formal proof we give an outline of our construction using an illustrative example in Fig.~\ref{fig:completebfs}. We take an arbitrary plane embedding    $G_\gamma$  of $G$ and start breath-first-search (BFS) from an arbitrary outer vertex $v$ of $G_\gamma$ and regard $r$ as the root of our desired spanning tree. In Fig.~\ref{fig:completebfs}(a)  BFS is started from vertex $a$, and vertex $b,c$ and $d$ are visited from $a$ in this order by BFS, as illustrated in Fig.~\ref{fig:completebfs}(b).  Next we visit $e$ from $b$ by BFS, as illustrated in Fig.~\ref{fig:completebfs}(c). When we visit a new vertex $x$ then we check whether  there is an edge $(x,y)$ such that $y$ is already visited and there is an   $(x,y)$-split component or an $x$-component or a $y$-component inside the cycle induced by the edge $(x,y)$ which does not contain the root $r$. The $\{e,d\}$-split component $H_1$ induced by the vertices $\{d,h,i,j,e\}$ is such a split component in Fig.~\ref{fig:completebfs}(c) and the subgraph $H_2$ induced by vertices ${d,m,n}$ is such a $y$-component for $y=d$ which are inside the cycle induced by the edge $(e,d)$. We move the subgraphs $H_1$ and $H_2$ out of the cycle induced by the non-tree edge $(e,d)$, as illustrated in Fig.~\ref{fig:completebfs}(d). Since $(b,e)$ is a tree edge and $(e,d)$ is a non-tree edge, according to definition of a good spanning tree, the edges $(e,f)$ and $(e,k)$ must be non-tree edges. Similarly since $(a,d)$ is a tree edge and $(e,d)$ a non-tree edge, then the edge $(d,l)$ must be non-tree edge. We mark  $(e,f)$, $(e,k)$ and $(d,l)$ non-tree edges as shown in the Fig.~\ref{fig:completebfs}(e). We then visit vertices $f,l,m,n$ and $g$, as illustrated in Fig.~\ref{fig:completebfs}(f). When we visit $k$,  we find a $k$-component $H$ induced by vertices $\{k,p,o\}$ and we move $H$ out of the cycle induced by $(e,k)$ as shown in Fig.~\ref{fig:completebfs}(g). Finally, at the end of BFS we find an embedding  $G_\phi$ of $G$ and a good spanning tree $T$ as illustrated in  Fig.~\ref{fig:completebfs}(h), where the good spanning tree $T$ is shown by  solid edges, and non-tree edges are shown by dashed edges. 

 \begin{figure}
\centering
\includegraphics[scale=0.38]{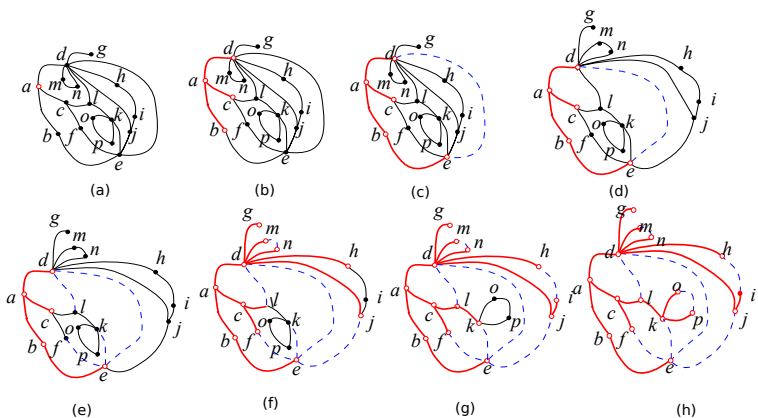}
\caption{Illustration for an outline of construction of a good spanning tree $T$. White vertices are visited vertices. Black vertices are not visited. Solid edges are tree edges. Dashed edges are non-tree edges.}
\label{fig:completebfs}
\end{figure}

We now formally prove our claim as in the following lemma.
\begin{lemma}
Let $G$ be a planar graph of $n$ vertices. Then $G$ has a plane embedding $G_\phi$ that contains a good spanning tree.
 \label{lemmatree}
\end{lemma}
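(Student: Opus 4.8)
## Proof Proposal

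The plan is to give a constructive algorithm, essentially formalizing the BFS-based outline preceding the lemma statement, and then verify by induction that the spanning tree $T$ and embedding $G_\phi$ it produces satisfy (Cond1) and (Cond2) for every vertex $v \neq r$. First I would fix an arbitrary plane embedding $G_\gamma$ of $G$ and choose an outer vertex $r$ as the root. I would then run BFS from $r$, processing vertices in BFS order and maintaining an invariant embedding of the as-yet-unexplored part of the graph together with a partial spanning tree. The central point of the construction is that, when a vertex $x$ is first visited via tree edge $(p(x),x)$, before we enqueue the neighbors of $x$ we must ``clean up'' the embedding: for every neighbor $y$ of $x$ that is already visited, the edge $(x,y)$ will be a non-tree edge inducing a cycle $C_{x,y}$ in $T\cup(x,y)$; we must ensure that the region of the current embedding lying inside $C_{x,y}$ and not containing $r$ is empty of vertices and edges, by detecting any $\{x,y\}$-split component, $x$-component or $y$-component that is trapped there and flipping it across the cycle to the outside. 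I would prove this flipping is always an embedding-preserving operation (it re-embeds a split component on the other side of a separating pair or a cut vertex, which is legal in a planar graph).

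Next I would formalize the consequences of each flip. After the clean-up at $x$, for every already-visited neighbor $y$, the cycle $C_{x,y}$ has empty interior on the side away from $r$, which forces any edge of $G$ incident to $x$ lying ``between'' the tree edge $(p(x),x)$ and the non-tree edge $(x,y)$ in the rotation at $x$ to itself be a non-tree edge (otherwise it would lead into the forbidden empty region or would have to cross $C_{x,y}$); these edges must be explicitly marked as non-tree edges, and symmetrically at $y$. This is exactly the mechanism producing the consecutive blocks $X_v, Y_v, Z_v$ of (Cond2): the tree edges descending from $v$ form one consecutive arc $Y_v$ in the rotation (they lie on the side toward $r$, by the BFS discipline and because every non-tree edge at $v$ to an earlier vertex has been processed as above), flanked on the two sides by the non-tree edges $X_v$ and $Z_v$, each of which, by the clean-up, reaches a vertex in some $T_{u_i^L}$ respectively $T_{u_i^R}$ for $i<k$ along $P(r,v)$.

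For (Cond1) I would argue that BFS from $r$ never creates a non-tree edge from $v$ to an ancestor $u_i$ on $P(r,v)$: such an edge $(v,u_i)$ with $i<k$ would be a chord of the path, but in BFS the edge $(v,u_i)$, if present in $G$, would have been examined when $v$ was discovered, and since $u_i$ is already visited at that time, $(v,u_i)$ would be a candidate for the clean-up step; the construction arranges that the only tree edge at $v$ toward the root is $(u_{k-1},v)$, so any other edge from $v$ to a vertex on $P(r,v)$ is a non-tree edge, and $(v,u_i)$ with $i < k-1$ contradicts the ``empty interior'' invariant unless it in fact coincides with an already-handled configuration — I would show it simply cannot occur because after processing, the arc of edges at $v$ toward the root side contains only $(u_{k-1},v)$ and non-tree edges going into subtrees $T_{u_j^L}, T_{u_j^R}$, never back to a $u_i$ itself. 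The main obstacle I anticipate is making the ``flip a trapped component to the outside of $C_{x,y}$'' step airtight: I must show (i) that after all such flips at $x$ the interiors of \emph{all} the cycles $C_{x,y}$ (over all visited neighbors $y$) can be simultaneously emptied without the flips interfering with one another, and (ii) that flipping does not retroactively violate (Cond1) or (Cond2) at previously processed vertices — i.e.\ the flips are ``local'' enough. Handling the interaction of nested split components and the order in which BFS reaches their boundary vertices is the delicate bookkeeping that the proof must carry out carefully, and I would structure it as a loop invariant maintained throughout the BFS.
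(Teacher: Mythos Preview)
Your construction is essentially the paper's: fix an arbitrary embedding, run BFS from an outer vertex $r$, and whenever a non-tree edge $(x,y)$ closes a cycle $C$, flip any $\{x,y\}$-split component or $x$-/$y$-component trapped on the wrong side of $C$ to the outside, then mark the appropriate arcs of edges at $x$ and at $y$ as non-tree. Your treatment of (Cond2) via the consecutive blocks $X_v,Y_v,Z_v$ also matches the paper's verification.

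The one place you diverge is (Cond1), and there your argument is weaker than it needs to be. You try to derive (Cond1) from the empty-interior invariant and drift toward circularity (``non-tree edges go into subtrees $T_{u_j^L},T_{u_j^R}$, never back to a $u_i$'' is essentially the conjunction of (Cond1) and (Cond2)(c), which is what you are trying to prove). The paper's argument is a one-liner you are missing: because $T$ is produced by BFS, every edge of $G$---tree or non-tree---joins two vertices whose BFS levels differ by at most one; so if $(v,u_i)$ is an edge with $u_i$ on $P(r,v)$ and $i<k$, the level constraint forces $u_i=u_{k-1}$, i.e.\ $u_i$ is the parent of $v$ and the edge is a tree edge. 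That immediately gives (Cond1) with no appeal to the flipping invariant. Replace your (Cond1) paragraph with this BFS-level observation and the rest of your outline coincides with the paper's proof.
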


\begin{proof}
 
We give a constructive proof.
Let $G_\gamma$ be any arbitrary embedding of $G$.
We first   mark an arbitrary outer vertex $r$ of $G_\gamma$ visited, and start clockwise BFS from $r$. The vertex $r$ will be the root of the BFS tree $T$.

Note that after visiting each vertex by BFS, the embedding of $G$ may be changed by our algorithm. Let $G_\gamma^i$  be the embedding of $G$ after visiting the $i$th vertex by BFS. Then $G_\gamma^1= G_\gamma$ since we do not change the embedding after visiting $r$. Let $T^i$ be the BFS tree after visiting  the $i$th vertex. Then $T^1$ contains the single vertex $r$. In counterclockwise BFS, we first visit a neighbor $s$ of $r$ such  
that $s$ is an outer vertex of $G_\gamma$ and if $s$ is on an outer cycle $C$ of $G_\gamma$ then $s$ is the counterclockwise neighbor of $r$ on $C$. We call the edge $(r,s)$ the {\it BFS-Start edge}. 


We now assume that  vertices $w_1(=r),w_2,\ldots,w_{j-1}$ $(j-1 < n)$ are visited by BFS and  we are visiting $w_j$ form $w'$, that is,  $w'$ is the parent of $w_{j}$ in $T^j$. We mark $w_j$ as visited and mark $(w', w_j)$ as a tree edge. If there is no edge $e=(w_j, v)$ such that $v \in  V(T^{j-1})$ and $v\ne w'$, then we proceed for the next vertex $w_{j+1}$. Otherwise, an edge $e=(w_j, v)$ exists such that $v \in  V(T^{j-1})$ and $v\ne w'$. In such a case we mark $e$ as a non-tree edge, and change the embedding of $G_\gamma^{j-1}$ to get $G_\gamma^{j}$, if necessary, as follows. 

We set $x=w_j$ and $y=v$ if $v$ comes earlier in the counterclockwise postorder traversal of $T^j$ started from  
 the BFS-Start edge; otherwise, we set  $x=v$ and $y=w_j$.
Let $C$ be the cycle induced by the non-tree edge $e=(x,y)$. Let $G_j(C)$ be the plane subgraph of $G_\gamma^{j}$ inside $C$ (including $C$).
We  check whether there is any $(x,y)$-split component or $x$-component or $y$-component in $G_j(C)$.  If there is a $(y,x)$-split component or a $x$-component or a $y$-component $H$ in $G_j(C)$ such that  $r\notin V(H)$. We move $H$ out of the cycle $C$ and obtain embedding $G_\gamma^j$. In Fig.~\ref{fig:sperationpair}(a) $I_1,\ldots,I_k$ are $(y,x)$-split components, $J_1,\ldots,J_l$ are $x$-components and  $K_1,\ldots,K_m$ are $y$-components, are move out of $C$ in  Fig.~\ref{fig:sperationpair}(b).

\begin{figure}
\centering
\includegraphics[scale=0.5]{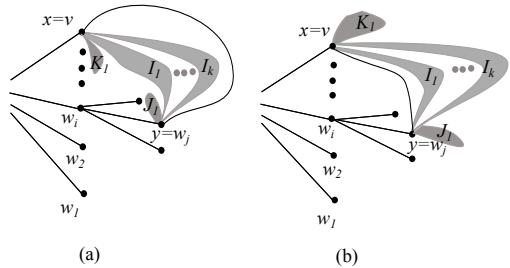}
\caption{ Illustration for $(x,y)$-split components, $x$-components and $y$-components.}
\label{fig:sperationpair}
\end{figure}

 One can observe that unvisited vertices inside cycle $C$ can be accessed from a visited vertex other than $x$ and $y$  in a later BFS step.
We thus mark some edges  incident to $x$ and $y$  as non-tree edges for maintaining the properties of a good spanning tree for $T^j$, as follows.
Let $y_p$ and $x_p$ be the parent of $y$ and $x$ in $T^j$, respectively. 
Let  $E_z=\{e_1,e_2,\ldots, e_k\}$ be the set of edges which are incident to the vertex $x$, and between the edges $(x_p, x)$ and  $(x,y)$ in counterclockwise order starting from the edge $(x_p, x)$ in $G_\gamma^j$, as shown in   Fig.~\ref{fig:blackedge}.  We mark the edges in $E_x$ as non-tree edges.  Note that the edges set  $E_z \cup \{(x,y)\}$ will be the set $Z_x$  with respect to the vertex $x$ in the final good spanning tree. Let $E_x=\{e_1,e_2,\ldots,e_l\}$ be the edges which are incident to the vertex $y$ between the edges $(y_p, y)$ and  $(y,x)$ in clockwise order started from the edge  $(y_p, y)$ in $G_\gamma^j$ as shown in   Fig.~\ref{fig:blackedge}. We also mark the edges in $E_x$ as non-tree edges.  The edge set  $E_x \cup \{(y,x)\}$ will be the set $X_y$ with respect to the vertex $y$ in the final spanning tree.

 \begin{figure}
\centering
\includegraphics[scale=0.5]{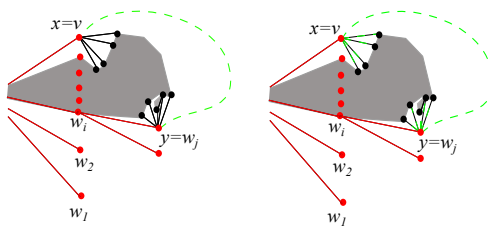}
\caption{ Marking non-tree edges from $x$ and $y$.}
\label{fig:blackedge}
\end{figure}
  
Finally we get $G_\gamma^n$ and $T^n$ after visiting $n$th vertex. 
We now show that $T^n$ is a good spanning tree in $G_\gamma^n$.


 
Clearly $T^n$ is a spanning tree in $G_\gamma^n$, since $T^n$ consists of tree edges identified by BFS in the connected graph $G_\gamma^n$.

We now show that the embedded tree  $T^n$ is a good spanning tree in the embedded graph $G_\gamma^n$. 
Firstly, each vertex $v$ of $T^n$ satisfies the condition (Cond1) of good spanning trees, because the tree edges are marked from BFS steps. 
According to the property of BFS, every edge of $G_\gamma^n$, whether a tree or a non-tree edge, joins two vertices whose levels differ by at most one. For a non-tree edge $e=(w_j, v)$, if $w_j$ lies on the path $P(r,v)$ in $T^n$ then $w_j$ is the parent of $v$. In this case  $e$  must be a tree edge. Similarly, $v$ does not lie on the path $P(r,w_j)$  in $T^n$   when $e=(w_j, v)$ is a non-tree edge. Hence, (Cond1) holds.

 We next show that $T^n$ satisfies (Cond2). Consider the situation when we dealt with edge $(x,y)$ while constructing $G_\gamma^j$ and $T^j$. 
  For the non-tree edge $(x, y)$, the non-tree edges in $Z_x$ are consecutive non-tree edges with respect to $x$ in $G_\gamma^n$. Let $Z$ be the set of vertices that contain other end vertices of the edges in $Z_x$  edges.  Clearly each vertex in $Z$    must be inside the cycle induced by $(x, y)$ in $T^n$. One can easily observe that if we traverse  $T^n$ as counterclockwise postorder traversal starting form the BFS-Start edge, the vertex $x$ will be visited after visiting all vertices in $Z$. Thus each vertex in $Z$ is contained in $T_{u_i^R}$ where $u_i$ is a vertex that lies on the path $P(r,x)$ and $x\ne u_i$ in  $T^n$. Similarly for the vertex $y$, it can be shown that the other end vertices of the edges in  $X_y$ are contained in  $T_{u_i^L}$. 
Thus one can easily observe that all edges incident to a vertex $v$ in $T^n$ can be partitioned into three consecutive edge sets $X_v$, $Y_v$ and $Z_v$. Hence, (Cond2) holds. Hence $T^n$ is a good spanning tree in $G_\phi=G_\gamma^n$.

   
  \end{proof}

 We have the following corollary based on the proof of Lemma~\ref{lemmatree}.
 \begin{corollary}
 \label{cccircl}
 Let $T$ be a good spanning tree in the embedding $G_\phi$ of $G$ obtained by the construction given in the proof of Lemma~\ref{lemmatree}. Then $T$ always has an edge with the property of the reference edge, mentioned in the proof of Lemma~\ref{theo1}.
\end{corollary}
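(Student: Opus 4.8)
The plan is to trace through the construction of Lemma~\ref{lemmatree} and observe that the BFS-Start edge itself survives the whole process as a tree edge meeting the requirements of a reference edge. Recall that the reference edge is an edge $(r,s)$ of $T$ such that $s$ is an outer vertex of $G_\phi$ and, if $s$ lies on an outer cycle $C$ of $G_\phi$, then $s$ is the counterclockwise neighbor of $r$ on $C$. The natural candidate is exactly the BFS-Start edge $(r,s)$ chosen at the very beginning of the construction: $s$ is picked so that $s$ is an outer vertex of $G_\gamma$ and is the counterclockwise neighbor of $r$ on an outer cycle of $G_\gamma$ whenever $r$ lies on such a cycle.

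The key steps are as follows. First I would note that $(r,s)$ is marked as a tree edge when $s$ is first visited (it is the first edge explored from the root), and no later step of the algorithm ever re-marks a tree edge as a non-tree edge: the only edges turned into non-tree edges are the edge $e=(x,y)$ currently being processed and the edges in $E_z$ and $E_x$, none of which is $(r,s)$ since those sets lie strictly inside a cycle induced by a non-tree edge and $r$ is explicitly excluded from the moved components. Hence $(r,s)$ is an edge of the final good spanning tree $T=T^n$. Second, I would argue that the relevant outer-face structure at $r$ is never disturbed. The embedding changes only by moving an $(x,y)$-split component, an $x$-component, or a $y$-component $H$ with $r\notin V(H)$ out of a cycle $C$; such a move alters the cyclic order of edges only around vertices of $H$ (and possibly $x,y$), so the rotation system at $r$ is identical in $G_\gamma$ and in $G_\phi$. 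Moreover, $r$ remains an outer vertex throughout, and since $H$ is moved \emph{out} of $C$ and never into the outer region in a way that would enclose $r$, the outer boundary incident to $r$ is preserved; in particular $s$ remains an outer vertex and remains the counterclockwise neighbor of $r$ along the outer boundary.

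Putting these together: $(r,s)$ is a tree edge of $T$, $s$ is an outer vertex of $G_\phi$, and if $s$ is on an outer cycle of $G_\phi$ then $s$ is the counterclockwise neighbor of $r$ on it — exactly the defining properties of a reference edge. Therefore $T$ has an edge with the reference-edge property, as claimed.

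The main obstacle I anticipate is the second step: carefully verifying that the local rotation at $r$ and the outer boundary through $r$ are genuinely invariant under the component-moving operations. One must check that moving $H$ outside $C$ cannot ``wrap around'' and change which faces are incident to $r$, and that the counterclockwise-neighbor relation between $r$ and $s$ is not inadvertently flipped when $r$ or $s$ happens to coincide with $x$ or $y$ in some step. This is mostly a matter of unwinding the definition of the move and using that $r$ is always kept in the component that stays put, but it needs to be stated precisely rather than waved away.
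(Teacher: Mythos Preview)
Your proposal is correct and follows exactly the same idea as the paper: the BFS-Start edge $(r,s)$ chosen at the beginning of the construction in Lemma~\ref{lemmatree} serves as the reference edge. The paper's own proof is a single sentence to that effect, so your additional justification (that $(r,s)$ is never re-marked as a non-tree edge and that the outer-boundary structure at $r$ is unchanged by the component moves) goes well beyond what the authors supply; but the approach is the same.
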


  \begin{proof}

The DFS-Start edge can be taken as the reference edge, mentioned in the proof of Lemma~\ref{theo1}. 
  
  \end{proof}

The following theorem is the main result of this paper.
\begin{theorem}
Every connected planar graph of $n$ vertices admits a monotone grid drawing on a grid of area $O(n)\times O(n^2)$, and such a drawing can be found in $O(n)$  time.
 \label{finalTh}
\end{theorem}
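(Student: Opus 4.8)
The plan is to assemble the theorem directly from the lemmas already proved, so the proof is essentially a bookkeeping argument plus an analysis of the running time. Given a connected planar graph $G$ of $n$ vertices, I would first invoke Lemma~\ref{lemmatree} to obtain a plane embedding $G_\phi$ of $G$ that contains a good spanning tree $T$, rooted at an outer vertex $r$. Then Lemma~\ref{theo1} immediately gives a monotone grid drawing of $G_\phi$ on an $O(n)\times O(n^2)$ grid, and since a drawing of $G_\phi$ is a drawing of $G$, the area bound is established. Corollary~\ref{cccircl} supplies the reference edge needed to make the construction in Lemma~\ref{theo1} well defined. So the only real content left is to argue that the whole pipeline runs in $O(n)$ time.

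For the time bound I would walk through the two constructive stages. The first stage is the construction of $G_\phi$ and $T$ in the proof of Lemma~\ref{lemmatree}: this is a single BFS from $r$, during which, upon visiting each new vertex $w_j$, we check for a back edge $(w_j,v)$ and, if found, detect and relocate the $(x,y)$-split components, $x$-components and $y$-components lying inside the induced cycle $C$, and mark the appropriate edges of $X_y$ and $Z_x$ as non-tree edges. I would note that each of these operations can be performed in time proportional to the size of the subgraph being moved plus the degrees of $x$ and $y$, and since each vertex and each edge is relocated at most a constant number of times over the whole run (an inner vertex, once moved out of a cycle, is never re-enclosed), the total work is $O(n)$ for a planar graph, where $|E|=O(n)$. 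The second stage is running Algorithm~\textbf{Draw-Monotone-Tree} on $T$ starting from the reference edge, which is linear by the modified version of the result underlying Lemma~\ref{lemmatreedrawing}, followed by inserting the $z=O(n)$ non-tree edges one at a time with the shifting/elongation procedure of Lemma~\ref{theo1}; each insertion does $O(1)$ arithmetic to recompute the shifted coordinates of the two subtrees $T_u$ and $T_v$, provided coordinates are stored implicitly (as offsets) rather than recomputed vertex by vertex, so this stage is also $O(n)$.

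The step I expect to be the main obstacle is precisely the amortized-linear-time claim for the embedding-modification stage: the naive description in the proof of Lemma~\ref{lemmatree} re-examines cycles and moves chunks of the graph around, and one must be careful that a vertex or edge is not shuffled $\Theta(n)$ times, which would blow the bound up to $\Theta(n^2)$. I would address this by maintaining the embedding via a suitable data structure (e.g.\ a doubly-linked incidence list together with a face/component structure), observing that once a subgraph $H$ is moved outside the cycle induced by a non-tree edge it becomes part of the region BFS explores next and is never again enclosed by a later induced cycle, so each edge participates in at most one relocation; combined with the linear cost of BFS itself and of the final drawing phase this yields the overall $O(n)$ bound. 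The remaining details — correctness of the area bound and of the monotonicity, already handled by Lemmas~\ref{lemma2}, \ref{lemma3}, \ref{theo1} and~\ref{lemmatree} — require no further argument here.
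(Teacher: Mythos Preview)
Your proposal follows exactly the paper's line: invoke Lemma~\ref{lemmatree} for the embedding $G_\phi$ with a good spanning tree, then Lemma~\ref{theo1} (with Corollary~\ref{cccircl} supplying the reference edge) for the $O(n)\times O(n^2)$ drawing, and finally argue that both constructive stages run in linear time.

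The one place where you diverge from the paper is the cost analysis of the embedding-modification stage. You charge each relocation proportionally to the size of the moved subgraph and then appeal to an amortized claim that a subgraph, once moved out of an induced cycle, is never re-enclosed. The paper avoids this entirely: it \emph{precomputes} all cut vertices and all $\{u,v\}$-split components once, in $O(n+m)$ time, via the standard DFS-based algorithms (Hopcroft--Tarjan), stores them, and then notes that at each non-tree edge $(u,v)$ the check costs $O(d(u)+d(v))$ while actually moving a component is an $O(1)$ splice of at most four pointers in the adjacency lists of $u$ and $v$. Thus the total is $O(m)=O(n)$ with no amortization needed. Your ``never re-enclosed'' invariant may be true, but you have not argued it, and it is the more delicate route; the paper's pointer-splicing observation makes the bound immediate. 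For the drawing stage the paper simply asserts that the recursive algorithm from the inductive proof of Lemma~\ref{theo1} can be implemented in $O(n)$ time, which is consistent with your offset-coordinate idea.
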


  \begin{proof}
Let $G$ be a connected planar graph. By Lemma~\ref{lemmatree} $G$ has a plane embedding $G_\phi$ such that $G_\phi$ contains a good spanning tree $T$. By Lemma~\ref{theo1}  $G_\phi$ admits a monotone grid drawing on a grid of area $O(n)\times O(n^2)$. 

We can find a good spanning tree $G_\phi$ of  $G$ using the construction in the proof of Lemma~\ref{lemmatree}. After visiting each vertex during construction 
we need to identify $v$-components and  $\{u,v\}$-split components for a non-tree edge $(u,v)$  then we need to check whether these components are inside of  $G_j(C)$ in the intermediate step. If any component is found then we need to move out the component. 



A $v$-component is introduced by a cut vertex. All cut vertices can be found in $O(n+m)$ time using DFS. $v$-components and $\{u,v\}$-split components can also be found in linear time~\cite{TAR73}. We maintain a data structure to store each cut vertex or every pair of vertices with split components. We then use this record in the intermediate steps for finding $G_\phi$ in Lemma~\ref{lemmatree}. 
Let us assume we are traversing $(u,v)$ non-tree edge in a intermediate step $j$ of our algorithm. We can check whether any $u$-components, $v$-components  and $\{u,v\}$-split components of $G$ exist in $G_j(C)$ of  Lemma~\ref{lemmatree} by checking each edges incident to $u$ and $v$. This checking costs $O(d(u)+d(v))$  time. Throughout the algorithm it needs $O(m)$ time.
For moving a component outside of cycle $C$ we need to change at most four pointers in the adjacency list of $u$ and $v$, which takes $O(1)$ time. Hence the required time is $O(m)$. 
Since $G$ is a planar graph,  $G_\phi$ can be constructed in $O(n)$ time.

After constructing  $G_\phi$ we can construct a monotone drawing of $G$ using a recursive algorithm based on the inducting proof of Lemma~\ref{theo1}. It is not difficult to implement the recursive algorithm in  $O(n)$ time.

  \end{proof}


\section{Conclusion}
\label{conclusion}
In this paper we have studied monotone  grid drawings of planar graphs. We have shown that a connected planar graph of $n$ vertices has a straight-line planar monotone drawing on an $O(n)\times O(n^2)$ grid and we can find such a drawing in  $O(n)$  time. Finding straight-line monotone grid drawings of planar graphs on a smaller grid is our future work.

 \bibliographystyle{splncs03}
 \bibliography{main}

\begin{thebibliography}{10}

\bibitem{PAT12}
Patrizio Angelini, Enrico Colasante, Giuseppe Di~Battista, Fabrizio Frati, and
  Maurizio Patrignani.
\newblock Monotone drawings of graphs.
\newblock {\em Journal of Graph Algorithms and Applications}, 16(1):5--35,
  2012.

\bibitem{Ang11}
Patrizio Angelini, Walter Didimo, Stephen Kobourov, Tamara Mchedlidze, Vincenzo
  Roselli, Antonios Symvonis, and Stephen Wismath.
\newblock Monotone drawings of graphs with fixed embedding.
\newblock {\em Algorithmica}, pages 1--25, 2013.

\bibitem{Ar89}
Esther~M Arkin, Robert Connelly, and Joseph S~B Mitchell.
\newblock On monotone paths among obstacles with applications to planning
  assemblies.
\newblock In {\em Proceedings of the fifth annual symposium on Computational
  geometry}, SCG '89, pages 334--343, New York, NY, USA, 1989. ACM.

\bibitem{BattistaETT99}
Giuseppe~Di Battista, Peter Eades, Roberto Tamassia, and Ioannis~G. Tollis.
\newblock {\em Graph Drawing: Algorithms for the Visualization of Graphs}.
\newblock Prentice-Hall, 1999.

\bibitem{DB88}
Giuseppe {Di Battista} and Roberto Tamassia.
\newblock Algorithms for plane representations of acyclic digraphs.
\newblock {\em Theoretical Computer Science}, 61(2--3):175 -- 198, 1988.

\bibitem{TAR73}
John~E. Hopcroft and Robert~Endre Tarjan.
\newblock Dividing a graph into triconnected components.
\newblock {\em SIAM Journal on Computing}, 2(3):135--158, 1973.

\bibitem{IQ12}
Md.~Iqbal Hossain and Md.~Saidur Rahman.
\newblock Straight-line monotone grid drawings of series-parallel graphs.
\newblock In Ding-Zhu Du and Guochuan Zhang, editors, {\em In Proceedings of
  the 19th Annual International Computing and Combinatorics Conference (COCOON
  2013)}, volume 7936 of {\em Lecture Notes in Computer Science}, pages
  672--679. Springer Berlin Heidelberg, 2013.

\bibitem{Huangx}
Weidong Huang, Peter Eades, and Seok-Hee Hong.
\newblock A graph reading behavior: Geodesic-path tendency.
\newblock In {\em Proceedings of the 2009 IEEE Pacific Visualization
  Symposium}, PACIFICVIS '09, pages 137--144, Washington, DC, USA, 2009. IEEE
  Computer Society.

\bibitem{NR04}
Takao Nishizeki and Md.~Saidur Rahman.
\newblock {\em Planar Graph Drawing}.
\newblock Lecture notes series on computing. World Scientific, Singapore, 2004.

\bibitem{MAHMSR07}
Md. Abul~Hassan Samee and Md.~Saidur Rahman.
\newblock Upward planar drawings of series-parallel digraphs with maximum
  degree three.
\newblock In {\em WALCOM'07}, pages 28--45, 2007.

\end{thebibliography}




\end{document}